\def\BibTeX{{\rm B\kern-.05em{\sc i\kern-.025em b}\kern-.08em
    T\kern-.1667em\lower.7ex\hbox{E}\kern-.125emX}}
\newtheorem{theorem}{Theorem}
\newtheorem{proposition}{Proposition}
\newtheorem{corollary}{Corollary}
\newtheorem{definition}{Definition}
\newtheorem{lemma}{Lemma}
\newenvironment{proof}{\noindent\textbf{Proof:}}{\hfill$\square$\par}
\DeclareMathOperator\supp{supp}
\newcommand{\F}{\mathbb{F}}
\newcommand{\Er}{\mathcal{E}}
\newcommand{\cP}{\mathcal{P}}
\newcommand{\cD}{\mathcal{D}}
\newcommand{\cR}{\mathcal{R}}
\begin{document}

\title{Quantum Maxwell Erasure Decoder for qLDPC codes
\thanks{AL acknowledges support from Agence Nationale de la Recherche through project ANR-22-PETQ-0006 of Plan France 2030.}
}

\author{\IEEEauthorblockN{Bruno Costa Alves Freire}
\IEEEauthorblockA{Inria Paris, Pasqal \\
bruno.costa-alves-freire@inria.fr}
\and
\IEEEauthorblockN{François-Marie Le Régent}
\IEEEauthorblockA{Pasqal \\
francois-marie.le-regent@pasqal.com}
\and
\IEEEauthorblockN{Anthony Leverrier}
\IEEEauthorblockA{Inria Paris \\
anthony.leverrier@inria.fr}
}

\maketitle

\begin{abstract}
We introduce a quantum Maxwell erasure decoder for CSS quantum low-density parity-check (qLDPC) codes that extends peeling with bounded guessing. Guesses are tracked symbolically and can be eliminated by restrictive checks, giving a tunable tradeoff between complexity and performance via a guessing budget: an unconstrained budget recovers Maximum-Likelihood (ML) performance, while a constant budget yields linear-time decoding and approximates ML. We provide theoretical guarantees on asymptotic performance and demonstrate strong performance on bivariate bicycle and quantum Tanner codes. \end{abstract}


\section{Introduction}

Quantum low-density parity-check (qLDPC) codes form a promising approach to low-overhead quantum fault tolerance~\cite{gottesman2014fault,bravyi2024high}. Realizing this promise requires fast decoding, and message-passing methods are attractive because they can scale nearly linearly with the blocklength.
However, direct adaptations of classical belief propagation---highly successful for classical LDPC codes~\cite{richardson2008modern}---face additional challenges for stabilizer codes linked to the presence of short cycles in the Tanner graph and degeneracy~\cite{poulin2008iterative}. 

While circuit-level noise models are ultimately relevant for fault-tolerant quantum computing, we focus here on the quantum erasure channel~\cite{bennett1997capacities,grassl1997codes}, where the locations of erased qubits are revealed. In classical LDPC history, the binary erasure channel served as a clean testbed that led to new tools (e.g., stopping sets and density evolution) and informed practical code design~\cite{1003839,luby2002efficient,richardson2008modern}.
The erasure setting is similarly attractive in the quantum case: maximum-likelihood (ML) can be formulated as a linear algebra problem on the erased positions and Gaussian elimination (with a cubic complexity) provides a concrete benchmark for best possible performance~\cite{delfosse2020linear}. Moreover, erasures are also physically relevant (loss, leakage, or flagged qubits) and can arise through ``erasure conversion'' techniques~\cite{wu2022erasure,scholl2023erasure}. 

The simplest iterative erasure decoder is \emph{peeling}: whenever a check is adjacent to exactly one erased variable, the constraint immediately determines that variable, and the algorithm propagates this information. 
Peeling coincides with the belief propagation (BP) decoder on the erasure channel and runs in linear time, but stops on a \emph{stopping set} (SS), i.e., a subset of erased positions whose induced residual Tanner graph contains no degree-1 check nodes. A crucial quantum feature is the presence of small SS induced by low-weight stabilizers. Understanding and exploiting the structure of such SS is therefore central to quantum erasure decoding. For instance, pruning and the VH decoder~\cite{Connolly_2024} are particularly effective for hypergraph product codes~\cite{Tillich_2014}. 
Recent constructions---including lifted product codes~\cite{panteleev2022asymptotically,panteleev2021degenerate}, bivariate bicycle (BB) codes~\cite{kovalev2013quantum,bravyi2024high} or quantum Tanner codes~\cite{leverrier2022quantum,leverrier2025small}---reinforce the need  for designing general-purpose and scalable decoders beyond the hypergraph product setting.

Besides VH and pruning, several general-purpose quantum erasure decoders have recently been proposed. The \emph{cluster decoder} post-processes the residual SS by decomposing it into clusters and solving small clusters exactly.
Increasing the cluster-size cutoff interpolates between fast decoding and ML performance~\cite{cluster_yao_gokduman_pfister}. 
Another approach adapts belief propagation with guided decimation (BPGD) to erasures, progressively fixing variables to help convergence~\cite{gokduman2024erasure}. These methods trade performance and complexity. Recent work also shows that degeneracy-aware BP-style decoders can achieve near-capacity performance on the quantum erasure channel for a broad class of stabilizer codes~\cite{kuo2024degenerate}.

We propose a quantum variant of the Maxwell decoder~\cite{maxwell_classic} that also interpolates between peeling and ML. Starting from peeling, whenever decoding stalls on a stopping set, we guess an erased variable and resume peeling. One can either branch over many assignments or track guesses symbolically. As peeling proceeds, some formal relations may be identified, leading to ``guess reimbursement''. A single parameter $G_{\max}$ bounds the number of active guesses and yields an explicit performance-complexity tradeoff.
Our contributions are a quantum Maxwell decoder for CSS qLDPC codes; a complexity analysis showing linear scaling at fixed $G_{\max}$; an asymptotic small-$\epsilon$ result relating the number of guesses to the failure exponent; and simulations on BB and quantum Tanner codes, including comparisons with the cluster decoder.

Section~\ref{sec:algo_description} presents the decoder, Section~\ref{sec:theory} analyzes complexity and the $\epsilon \to 0$ regime performance, and Section~\ref{sec:numerics} reports our numerical results.

\section{Quantum Maxwell decoder} \label{sec:algo_description}

We consider the quantum erasure channel, where an erasure set $\mathcal{E}_0 \subseteq[n]$ of qubits is revealed to the decoder. We restrict to CSS stabilizer codes specified by parity-check matrices
$H_X\in\F_2^{m_X\times  n}$ and $H_Z\in\F_2^{m_Z\times n}$ satisfying $H_XH_Z^T=0$.
Writing a Pauli error in symplectic form as $(e_X,e_Z)\in\F_2^n\times\F_2^n$ with
$\supp(e_X)\cup\supp(e_Z)\subseteq \mathcal \Er_0$, the measured syndromes satisfy
\begin{equation}\label{eq:css_syndromes}
\sigma_Z = H_Z e_X,\qquad \sigma_X = H_X e_Z.
\end{equation}
Thus erasure decoding for a CSS code reduces to two binary linear problems on the \emph{same} erasure pattern:
find $w^X,w^Z\in\F_2^n$ supported in $\mathcal{E}_0$ such that
$H_Z w^X=\sigma_Z$ and $H_X w^Z=\sigma_X$.

The main subproblem is as follows: given a binary parity-check matrix $H\in\F_2^{m\times n}$, syndrome $\sigma\in\F_2^m$, and erasure set $\mathcal{E}_0$, we seek a solution $w\in\F_2^n$ such that $\supp(w)\subseteq\mathcal{E}_0$ and $Hw=\sigma$. We denote by $\mathcal{E}$ the current residual erasure set during the decoding and by $T_{\mathcal E}(H)$ the residual Tanner graph induced by the erased variables. A check is \emph{dangling} if it has residual degree $1$ in $T_{\mathcal E}(H)$; peeling repeatedly resolves dangling checks in linear time. Peeling stops on a \textit{stopping set} (SS), i.e., when there are no dangling checks.

The Maxwell decoder augments peeling with guesses: whenever peeling gets stuck, we \emph{guess} one erased variable and resume peeling. Conceptually, $g$ guesses create $2^g$ branches, one per assignment of the guessed bits. To avoid explicit branching, one can represent all messages as \emph{affine forms} over $\F_2$ in a set of pivot variables $\mathbf x=(x_p)_{p\in\mathcal{P}}$, $\mathcal{P}\subset [n]$: each variable assignment $w_j$ and each check right-hand side $s_i$ is stored as $a_0+\sum_{p\in\mathcal P} a_p x_p$ with $a_0,a_p\in\F_2$. A newly guessed erased variable becomes a new pivot: $w_g\leftarrow x_{g}$, $\mathcal{P}\gets\mathcal{P}\cup\{g\}$.

Guesses are validated through \emph{restrictive checks}.
During peeling, a check may become degree $0$ in the residual graph while its cumulative syndrome is nonzero;
this yields a constraint $s_i(\mathbf x)=0$ on pivots.
We enforce such constraints by \emph{pivot demotion}: pick one pivot appearing in $s_i$ and rewrite it as an affine function of the others,
then substitute everywhere. This will reduce $|\mathcal P|$ and effectively \emph{reimburse} guesses.
One may for instance demote the most recently introduced pivot appearing in the constraint, and we consider this policy later in the complexity analysis.

Algorithm~\ref{algo:maxwellpeel} gives pseudocode for the symbolic Maxwell routine.

\begin{algorithm}
\caption{\textsc{MaxwellPeel}$(H,\sigma,\mathcal{E}_0,G_{\max})$ (symbolic)}
\label{algo:maxwellpeel}
\begin{algorithmic}[1]
\REQUIRE Parity-check matrix $H$, syndrome $\sigma$, erasure set $\mathcal{E}_0 \subseteq[n]$, guess budget $G_{\max}$
\ENSURE \textbf{Failure} or affine-form assignments $w(\mathbf x)$ supported on $\mathcal{E}_0$
\STATE Initialize erased set $\mathcal{E} \gets \mathcal{E}_0$
\STATE Initialize variable corrections $w_j \gets 0$ for all $j\in[n]$
\STATE Initialize cumulative syndromes $s_i \gets \sigma_i$ for all checks $c_i$
\STATE Initialize pivot set $\mathcal P \gets \varnothing$
\STATE Initialize dangling checks $\mathcal D$ and restrictive checks $\mathcal R$
\STATE $\mathcal D \gets \{\,c_i : \deg_{\mathcal E}(c_i)=1\,\}$ \hfill
      $\mathcal R \gets \{\,c_i : \deg_{\mathcal E}(c_i)=0 \ \wedge\ s_i\neq 0\,\}$

\WHILE{$\mathcal E \neq \varnothing$}

  \WHILE{$\mathcal D \neq \varnothing$}
    \STATE Pop $c_i$ from $\mathcal D$
    \IF{$\deg_{\mathcal E}(c_i)=1$} 
      \STATE Let $v_j$ be the unique erased neighbor of $c_i$
      \STATE Set $w_j \gets s_i$
      \STATE Remove $v_j$ from $\mathcal E$
      \STATE Propagate $w_j$ to all neighboring checks $c_{i'}\in N(v_j)$:
             update $s_{i'}$ and the residual degrees
      \STATE Update $\mathcal D$ and $\mathcal R$
    \ENDIF
  \ENDWHILE

  \WHILE{$\mathcal R \neq \varnothing$}
    \STATE Pop $c_i$ from $\mathcal R$ \hfill \{constraint $s_i(\mathbf x)=0$\}
    \STATE Choose a pivot $x\in \mathrm{supp}(s_i)$ (e.g., the most recent one) and solve for $x$
    \STATE Substitute $x$ everywhere in affected $w_j$ and $s_{i'}$
    \STATE Update $\mathcal P \gets \mathcal P \setminus \{x\}$ and update $\mathcal R$
  \ENDWHILE

  \IF{$\mathcal E \neq \varnothing$}
    \IF{$|\mathcal P| = G_{\max}$}
      \RETURN \textbf{Failure}
    \ENDIF
    \STATE Pick a pivot variable node $v_g \in \mathcal E$ \hfill \{pivot selection rule\}
    \STATE Introduce new pivot $x_{|\mathcal P|+1}$ and set $w_g \gets x_{|\mathcal P|+1}$
    \STATE Remove $v_g$ from $\mathcal E$, propagate to neighboring checks, update $\mathcal D,\mathcal R$
    \STATE $\mathcal P \gets \mathcal P \cup \{x_{|\mathcal P|+1}\}$
  \ENDIF

\ENDWHILE

\RETURN $w(\mathbf x)$
\end{algorithmic}
\end{algorithm}

We obtain the quantum Maxwell decoder by applying Algorithm~\ref{algo:maxwellpeel} independently to the two CSS components:
\begin{align}
w^X(\mathbf x)&\leftarrow \textsc{MaxwellPeel}(H_Z,\sigma_Z,\Er_0,G_{\max}),\\
w^Z(\mathbf y)&\leftarrow \textsc{MaxwellPeel}(H_X,\sigma_X,\Er_0,G_{\max}).
\end{align}
Evaluating the affine forms at an assignment of the pivots yields explicit corrections. For online decoding, it suffices to output any valid evaluation (e.g., $\mathbf x=\mathbf 0$) after all restrictive checks have been applied. For benchmarking, one needs to verify if the corresponding correction differs from the true error by an element of the stabilizer group.

Two orthogonal choices can be added to the decoder without changing the core routine:
(i) \emph{pruning} before guessing, as in pruned peeling/VH~\cite{Connolly_2024}, which can remove stabilizer-induced stopping sets;
(ii) a \emph{pivot selection rule} (uniformly at random, or a local score based on how many dangling checks would be created by guessing a variable).
We evaluate these options in Appendices \ref{app:pruning} and \ref{app:strategies}.

\section{Theoretical properties}
\label{sec:theory}
In this section we establish and prove theoretical results on the runtime complexity and decoding performance of the quantum Maxwell decoder described in Section \ref{sec:algo_description}. 
Throughout, we analyze one binary instance $(H,\sigma,\Er_0)$; the CSS decoder applies the same
analysis independently to $(H_Z,\sigma_Z,\Er_0)$ and $(H_X,\sigma_X,\Er_0)$.

\subsection{Notation and data structure model}
\label{subsec:notation}

Fix a binary parity-check matrix $H\in\mathbb{F}_2^{m\times n}$ with Tanner graph having variable
nodes $V=[n]$ and check nodes $C=[m]$. For a node $u$ we write $N(u)$ for its neighbors. The input
to \texttt{MAXWELLPEEL} is a syndrome vector $\sigma\in\mathbb{F}_2^m$ and an initial erasure set
$\Er_0\subseteq[n]$.

The algorithm maintains a \emph{residual} erased set $\Er \subseteq \Er_0$ (initialized as $\Er\leftarrow \Er_0$).
For a check $c_i\in C$, its residual degree is
\begin{equation}
\deg_{\Er}(c_i)\triangleq |N(c_i)\cap \Er|.
\end{equation}
The set of \emph{dangling checks} and \emph{restrictive checks} at any time are
\begin{align}
\cD &\triangleq \{\,c_i\in C:\deg_E(c_i)=1\,\}, \\
\cR &\triangleq \{\,c_i\in C:\deg_E(c_i)=0\ \wedge\ s_i(x)\neq 0\,\}.
\end{align}
Here $s_i(x)$ is the algorithm's maintained \emph{cumulative syndrome} (an affine form over the
current pivot variables), and $s_i(x)\neq 0$ means ``not the identically-zero affine form.''

Guesses are tracked via an \emph{active pivot set} denoted $\cP=\{x_{p_1},\dots,x_{p_g}\}$ with $g=|\cP|\le G_{\max}$.
Each variable correction $w_j(x)$ and each cumulative syndrome $s_i(x)$ is stored as an affine form
over $\mathbb{F}_2$ in the pivots $\cP$:
\begin{equation}
a_0 + \sum_{p\in P} a_p x_p,\qquad a_0,a_p\in\mathbb{F}_2.
\end{equation}

We represent affine forms as vectors of length at most $G_{\max}+1$ (one constant bit plus one coefficient bit per pivot).
Hence affine addition and affine substitution both cost $O(G_{\max})$ bit operations. 

We rely on the following pivot ranking policy (PRP): whenever a restrictive check is processed, the algorithm demotes  the \emph{most recently introduced} pivot appearing in the constraint. Equivalently, pivots are ordered by creation time and
the newest pivot in $\mathrm{supp}(s_i)$ is eliminated.

\subsection{Complexity analysis} \label{sec:complexity}

We assume $H$ is $(d_v,d_c)$-LDPC, meaning that columns and rows have weight at most $d_v$ and $d_c$, respectively, with $d_v, d_c = O(1)$. Let $e\triangleq |\Er_0|$.  

\begin{theorem}[Runtime of symbolic \texttt{MAXWELLPEEL} under PRP]
\label{thm:runtime}
Algorithm~\ref{algo:maxwellpeel} runs in $O\left(e d_v d_c G_{\max}^2\right)$ bit operations. In particular, for fixed $G_{\max}$ and bounded degrees, the runtime is $O(e)$.
\end{theorem}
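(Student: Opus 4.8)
The plan is to split the total work into the three elementary operations the algorithm performs---peeling resolutions (the inner $\cD$-loop), guesses, and restrictive-check demotions (the inner $\cR$-loop)---and to bound, for each type, both the number of occurrences and the per-occurrence cost. The one uniform cost input is that every affine form is stored in a length-$(G_{\max}+1)$ bit vector, so a single affine addition or affine substitution costs $O(G_{\max})$ bit operations; this accounts for the first factor of $G_{\max}$.

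First the event counts. Each variable of $\Er_0$ is removed from $\Er$ exactly once and never returns, so the number of peeling resolutions plus the number of guesses is at most $e$. Because $|\cP|$ starts at $0$, stays nonnegative, and moves by $+1$ on a guess and $-1$ on a demotion, the number of demotions is at most the number of guesses, hence at most $e$. A resolution or a guess removes one variable and updates its $\le d_v$ incident checks, costing $\le d_v$ affine additions plus $O(d_v)$ bookkeeping; summed over the $\le e$ such events this contributes $O(e\,d_v\,G_{\max})$, already within the target.

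The substantive part is the total cost of the $\le e$ demotions. I would first observe that only $O(e\,d_v)$ forms ever become symbolic---the $\le e$ corrections $w_j$ and the syndromes $s_i$ of the $\le e\,d_v$ checks meeting $\Er_0$---and that each syndrome is written by propagation at most $d_c$ times, once per removed neighbour of its check. The goal is then to show the total number of substitution rewrites is $O(e\,d_v\,d_c\,G_{\max})$; multiplying by the $O(G_{\max})$ cost of one affine substitution yields $O(e\,d_v\,d_c\,G_{\max}^2)$, which dominates the peeling/guessing term and proves the claim. This is exactly where the pivot-ranking policy is indispensable: a demotion eliminates the \emph{most recently created} pivot appearing in the constraint and rewrites it solely in terms of \emph{older} pivots, so a demoted pivot disappears permanently and every substitution can only inject pivots that are strictly older than the one it deletes. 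This induces a well-founded elimination order along pivot age, which I would turn into the required bound by an amortized charging argument---e.g.\ a potential that weights each live (form, pivot) incidence by the pivot's age---so that each demotion is paid for by a matching decrease in potential previously deposited by a propagation write or a guess.

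I expect the delicate step to be making this amortization tight in $G_{\max}$. The naive accounting does not close: one substitution reinjects up to $G_{\max}$ older pivots into each touched form, these older pivots may themselves be demoted later, and an uncontrolled incidence count recurses with a multiplicative $G_{\max}$ factor and can blow up. Keeping the final dependence polynomial---here $G_{\max}^2$---requires combining the age-ordering from the pivot-ranking policy with the hard cap that every form carries at most $G_{\max}$ pivots at any instant, so that reinjected incidences cannot accumulate without a compensating deletion. Verifying that these two facts together bound the total rewrites by $O(e\,d_v\,d_c\,G_{\max})$---in particular handling coefficient cancellations in $\F_2$ and the re-ranking of pivots as they are created and eliminated---is the crux of the argument.
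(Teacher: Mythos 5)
Your decomposition, event counts, and per-operation costs all match the paper's proof, and you correctly identify the one property of the pivot-ranking policy that matters: a demotion rewrites the \emph{newest} pivot in the constraint purely in terms of strictly older pivots, so substitutions never inject pivots younger than the one they eliminate. But your argument stops exactly at the step that carries the theorem. You declare the bound of $O(e\,d_v\,d_c\,G_{\max})$ total rewrites to be ``the crux,'' propose an age-weighted potential-function amortization to establish it, and then say yourself that the naive accounting does not close because substitutions reinject older pivots that may later be demoted, threatening a recursive blow-up in $G_{\max}$. That is a genuine gap: no amortization is actually carried out, and the worry you raise is precisely what a correct argument must dissolve.

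The paper dissolves it with a simple per-form counting lemma, not a potential function. Fix a variable $v_j$ and let $\tau(j)$ be the moment it is last assigned (peeled or guessed). From then on $w_j$ is never reassigned, and by the age-monotonicity you already observed, no substitution can ever introduce into $w_j$ a pivot created after $\tau(j)$. Hence the set of pivots that can \emph{ever} occur in $w_j$ is frozen: it is a subset of the at most $G_{\max}$ pivots alive at time $\tau(j)$. Since each pivot is eliminated at most once globally (once demoted it leaves $\cP$ forever), $w_j$ is touched by at most $G_{\max}$ demotions over its entire lifetime---there is no recursion to control, because the reinjected older pivots already belong to the frozen set and were counted once. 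Each touch costs $O(G_{\max})$, giving $O(e\,G_{\max}^2)$ over all variables; and since a check's $s_i$ acquires pivots only through the $w_j$'s of its at most $d_c$ removed neighbors, each $s_i$ suffers at most $d_c\,G_{\max}$ demotion-driven updates, giving $O(e\,d_v\,d_c\,G_{\max}^2)$ over the at most $e\,d_v$ checks adjacent to $\Er_0$. Note that the cap you invoke---each form carries at most $G_{\max}$ pivots at any instant---only bounds the cost \emph{per} rewrite, not the \emph{number} of rewrites; the missing ingredient is the frozen-set observation combined with the fact that each pivot is demoted at most once.
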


\begin{proof}
We decompose the cost into: (i) degree/bookkeeping, (ii) affine propagation during peeling/guessing,
and (iii) substitutions during pivot demotions.

\paragraph*{(i) Degree bookkeeping and maintaining $\cD$ and $\cR$}
Whenever a variable $v_j$ is removed from $\Er$ (peeled or guessed), we scan its $d_v$ neighboring checks
and decrement their residual degrees. Since each erased variable is removed exactly once, this contributes $O(e\,d_v)$ time.

\paragraph*{(ii) Peeling/guess propagation of affine forms}
When $v_j$ is removed, the algorithm propagates $w_j(x)$ to each adjacent check $c_{i'}\in N(v_j)$ by
$s_{i'}(x)\leftarrow s_{i'}(x)+w_j(x)$. Each affine addition costs $O(|\cP|)\le O(G_{\max})$ and there are at most $e\,d_v$ such updates. Hence the total propagation cost is $O(e\,d_v\,G_{\max})$. 

\paragraph*{(iii) Bounding total substitution work from pivot demotions}
A pivot demotion eliminates one pivot $x_p$ by expressing it as an affine form in older pivots and
substituting it into all affected affine forms. Under PRP, substitutions never introduce pivots newer
than the eliminated pivot.

For each erased variable $v_j\in \Er_0$, let $\tau(j)$ be the time at which $v_j$ is removed from the
residual set $\Er$ and its affine correction $w_j(x)$ is \emph{last assigned} by either peeling or guessing.

\begin{lemma}
\label{lem:no-future-pivots}
Under PRP, after time $\tau(j)$, the affine form $w_j(x)$ can only be modified by demotions of pivots
that already existed at time $\tau(j)$. In particular, $w_j(x)$ is updated by at most $G_{\max}$ pivot
demotions.
\end{lemma}

\begin{proof}
After $\tau(j)$, the algorithm never reassigns $w_j$ (the variable is no longer in $\Er$); the only
possible changes come from substituting eliminated pivots. Under PRP, each demotion rewrites the
\emph{newest} pivot in a restrictive constraint as a function of strictly older pivots, so substitutions
cannot create occurrences of pivots introduced after $\tau(j)$. Therefore only pivots that existed at
$\tau(j)$ can ever affect $w_j$, and there are at most $G_{\max}$ such pivots, each
eliminated at most once.
\end{proof}

By Lemma~\ref{lem:no-future-pivots}, across all $e$ erased variables, the total number of updates to
variable affine forms $\{w_j\}_{j\in \Er_0}$ caused by demotions is at most $e\,G_{\max}$, and each such
update costs $O(G_{\max})$, giving $O(e\,G_{\max}^2)$.

Next, we bound substitution work on the check affine forms $\{s_i\}$. Each check $c_i$ has at most
$d_c$ neighboring erased variables, and each such neighbor's $w_j$ changes at most $G_{\max}$ times due
to Lemma~\ref{lem:no-future-pivots}. Thus each check's $s_i$ is updated at most $d_c\,G_{\max}$ times
due to demotions. The number of distinct checks adjacent to $\Er_0$ is at most $e\,d_v$, hence the total number of demotion-driven updates to checks
is $O(e\,d_v\,d_c\,G_{\max})$, each costing $O(G_{\max})$, for a total check-substitution cost
$O(e\,d_v\,d_c\,G_{\max}^2)$.

This establishes the theorem.
\end{proof}

In appendix \ref{app:strategies}, we discuss the impact on the complexity of implementing the score-based pivot selection rule mentioned in Section \ref{sec:algo_description}, providing here only its definition.

\begin{definition}(Guess score) \label{def:gs}
    Define the \emph{guess score} of an erased variable $v$ as the number of neighboring check nodes of degree exactly $2$. The score-based guessing strategy consists in picking the variable with the highest score to be the new pivot. 
\end{definition}

\subsection{Performance in the asymptotic regime}

\begin{definition}[Stopping sets and stopping distance]
\label{def:stopping}
A nonempty set $S\subseteq[n]$ is a \emph{stopping set} for $H$ if every check node adjacent to $S$ has at
least two neighbors in $S$. The \emph{stopping distance} of $H$ is
\[
s(H)\triangleq \min\{|S|: S\neq\varnothing \text{ is a stopping set for }H\}.
\]
For a CSS code, define $s\triangleq \min\{s(H_X),s(H_Z)\}$.
\end{definition}

For $w \leq n$, let $A^{\mathrm{SS/NTLO}}(w)$ be the number of $w$-subsets of $[n]$ that are respectively stopping sets or the support of nontrivial logical operators of the code. We are interested in the case where $A^{\mathrm{SS}}(w)$ is strictly larger than $A^{\mathrm{NTLO}}(w)$ because they correspond to erasures for which the Maxwell decoder can fail more often than the ML decoder.

\begin{definition}[Distribution gaps]
For $t \leq n$, we define $W_0(t) \triangleq \{ 0 < w \leq t : A^{\mathrm{SS}}(w) > A^{\mathrm{NTLO}}(w) \}$ and the \emph{distribution gap} to be $\gamma(t) \triangleq |W_0(t)|$.
\end{definition}

For a deterministic non-persistent decoder whose success/failure depends only on the erasure set, the logical failure probability on the quantum erasure channel with rate $\epsilon$ can be written as
\[
p_L(\epsilon)=\sum_{E\subseteq[n]} \epsilon^{|E|}(1-\epsilon)^{n-|E|}\,\mathbf{1}[E\in\mathcal{F}],
\]
where $\mathcal{F}$ is the set of erasure patterns on which the decoder fails. Let $\mathcal{F}_{\rm ML}$ denote the set of erasure patterns that are information-theoretically uncorrectable (equivalently, contain the support of a nontrivial logical operator), and let $\mathcal{F}_{\rm QM}(G_{\max})$ be the set of erasure patterns on which the Maxwell decoder fails due to the guess budget. We denote $p_L^{\rm ML}$ and $P_L^{{\rm QM}(G_{\max})}$ the corresponding logical failure probabilities.

\begin{theorem}[Budget exhaustion implies large erasure]
\label{thm:failure-large-strong}
Consider an erasure set $\Er_0 \notin \mathcal{F}_{\mathrm{ML}}$, such that $|\Er_0| \le t$, for some $t \in [n]$. If $G_{\max} \ge \gamma(t)$, then $\Er_0 \notin \mathcal{F}_{\rm QM}(G_{\max})$. 
\end{theorem}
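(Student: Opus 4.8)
The plan is to prove the contrapositive-flavored statement: if \texttt{MAXWELLPEEL} exhausts its budget on $\Er_0$, then $\Er_0$ must contain the support of a nontrivial logical operator, contradicting $\Er_0\notin\mathcal{F}_{\mathrm{ML}}$. The backbone is a structural identification of the residual erased sets at the instants the decoder is forced to guess.

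First I would record the key invariant: whenever the inner peeling and restrictive loops terminate with $\Er\neq\varnothing$ (a ``stuck point,'' where the algorithm either guesses or reports failure), the current residual $\Er$ is a nonempty stopping set of $H$. Indeed, the peeling loop exits only once $\cD=\varnothing$, i.e.\ no check has residual degree exactly $1$, and restrictive-check processing performs only pivot substitutions, leaving the residual degrees and $\Er$ itself unchanged; hence every check with $\deg_{\Er}(c)\ge 1$ has $\deg_{\Er}(c)\ge 2$, which is exactly the stopping-set condition of Definition~\ref{def:stopping}. Moreover, between consecutive stuck points the decoder removes at least one variable (the guess, plus any subsequent peels) and never re-inserts variables, so the residuals at successive stuck points satisfy $S_1\supsetneq S_2\supsetneq\cdots$ with strictly decreasing sizes $w_1>w_2>\cdots$, each $S_i\subseteq\Er_0$, each $w_i\in(0,t]$ (since $|S_1|\le|\Er_0|\le t$), and each counted by $A^{\mathrm{SS}}(w_i)\ge 1$.

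The crucial observation is that every NTLO-support is itself a stopping set, hence $A^{\mathrm{NTLO}}(w)\le A^{\mathrm{SS}}(w)$ for all $w$. If $\ell$ is a nontrivial logical with $H\ell=0$, then every check sees an even number of coordinates of $\supp(\ell)$, so any check adjacent to $\supp(\ell)$ has at least two neighbors in it, which is precisely the stopping-set condition. Combining this inclusion with the definition of $W_0(t)$, for any $w$ with $0<w\le t$ and $w\notin W_0(t)$ we have $A^{\mathrm{SS}}(w)\le A^{\mathrm{NTLO}}(w)\le A^{\mathrm{SS}}(w)$, forcing equality of the two counts; since the NTLO-supports form a subcollection of the stopping sets, equality of cardinalities upgrades to equality of collections. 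Thus \emph{every} weight-$w$ stopping set is an NTLO-support whenever $0<w\le t$ and $w\notin W_0(t)$.

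With these pieces in place I would argue by contradiction. Suppose the decoder fails on $\Er_0$ for budget reasons: at the failure one has $|\cP|=G_{\max}$ with $\Er\neq\varnothing$, which is itself a stuck point, and reaching $|\cP|=G_{\max}$ required at least $G_{\max}$ earlier guesses (each guess raises $|\cP|$ by one while reimbursements only lower it). Hence the run passes through at least $G_{\max}+1$ stuck points, producing at least $G_{\max}+1$ distinct stopping-set weights in $(0,t]$. Since $|W_0(t)|=\gamma(t)\le G_{\max}$, the pigeonhole principle yields at least one such weight $w^\star\notin W_0(t)$; by the previous paragraph the corresponding residual $S^\star$ is an NTLO-support, and $S^\star\subseteq\Er_0$ exhibits $\Er_0$ as containing the support of a nontrivial logical operator, i.e.\ $\Er_0\in\mathcal{F}_{\mathrm{ML}}$ --- a contradiction. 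Therefore $\Er_0\notin\mathcal{F}_{\rm QM}(G_{\max})$. The same reasoning applies verbatim to each CSS subproblem $(H_Z,\sigma_Z,\Er_0)$ and $(H_X,\sigma_X,\Er_0)$, and since the code-level counts preserve the inclusion of NTLO-supports among stopping sets, the code-level gap $\gamma(t)$ governs both.

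I expect the main obstacle to be the bookkeeping that converts ``budget exhausted'' into ``at least $G_{\max}+1$ distinct stopping-set weights'': one must account correctly for the interleaving of guesses and reimbursements to guarantee the strict decrease of residual sizes (hence distinct weights), and, most delicately, justify the passage from equality of the \emph{counts} $A^{\mathrm{SS}}(w^\star)=A^{\mathrm{NTLO}}(w^\star)$ to the conclusion that the \emph{specific} residual $S^\star$ generated by the run is a logical support, rather than merely asserting existence of some logical of that weight. The inclusion of NTLO-supports among stopping sets is exactly what makes this upgrade valid and is the linchpin of the whole argument.
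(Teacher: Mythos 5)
Your proof is correct, and it rests on the same two ingredients as the paper's own argument: (i) whenever the decoder must guess or abort, the residual $\Er$ is a nonempty stopping set contained in $\Er_0$, and the residuals at successive stuck points strictly decrease in size; (ii) every NTLO support is a stopping set, so $A^{\mathrm{NTLO}}(w)\le A^{\mathrm{SS}}(w)$ for all $w$, hence at any weight $0<w\le t$ with $w\notin W_0(t)$ the two nested families have equal cardinality and therefore coincide, making \emph{every} stopping set of that weight an NTLO support. What differs is the bookkeeping. The paper proves the statement by induction on the budget $G$, via the auxiliary sets $W_G(t)$ obtained by deleting the $G$ smallest elements of $W_0(t)$: each stuck-point residual, being a stopping set that is not an NTLO support, has size in $W_0(t)$, and each guess pushes the next residual strictly below the previous threshold, so the sizes march down through the smallest elements of $W_0(t)$. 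You instead argue by contradiction over a \emph{single} failing run: budget exhaustion forces at least $G_{\max}$ guesses plus the aborting stuck point, hence at least $G_{\max}+1$ distinct stopping-set weights in $(0,t]$, and pigeonhole against $|W_0(t)|=\gamma(t)\le G_{\max}$ produces a weight $w^\star\notin W_0(t)$ whose residual is then an NTLO support inside $\Er_0$, contradicting $\Er_0\notin\mathcal{F}_{\mathrm{ML}}$. Your single-run formulation buys a genuine simplification: the paper's inductive step applies Cond($G-1$) to the post-guess residual $\Er'$, i.e., it re-launches the decoder on $\Er'$ as a fresh input, which implicitly relies on the fact that success or failure of the erasure decoder depends only on the erasure set (not on the syndrome or the accumulated affine forms); your argument never restarts the decoder and so does not need this. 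Conversely, the paper's induction exhibits the sharper structural picture---the budget is spent precisely on the smallest elements of $W_0(t)$---which is what motivates the bounds $\gamma(d-1)\le d-s$ feeding Corollary~\ref{cor:match-ml}. Finally, you make explicit the count-to-collection upgrade (equal cardinalities of nested families force equality of the families), which the paper invokes only implicitly in the phrase ``such structures of compatible size would correspond to ML failure cases''; isolating that lemma, as you do, is where the real content of the theorem sits.
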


Intuitively, under the assumption that $\Er_0 \notin \mathcal{F}_{\rm ML}$, every state where \texttt{PEELING} gets stuck corresponds to an ML-correctable set, and each guess strictly decreases the size of the erased set, so the number of guesses required is bounded by the number of possible sizes, that is $\gamma(t)$.

\begin{proof}
    The proof relies on the following statement, to be proven by induction: 
    \begin{align*}
        \text{If } \Er_0 \notin\; & \mathcal{F}_{\rm ML} \text{ and } |\Er_0| < \min W_{G}(t),\\& \text{ then } \Er_0 \notin \mathcal{F}_{\rm QM}(G), \tag{Cond($G$)}
    \end{align*}
    where we define 
    $$W_G(t) \triangleq \begin{cases}
        W_0(t) \setminus \left\{\mqty{W_0(t) \text{'s } G \\\text{ smallest elements}}\right\}, \text{ if } G < \gamma(t)\\
        \{t+1\}, \text{ if } G\ge\gamma(t).
    \end{cases}$$
    
    To prove the base case Cond($0$), notice that if $\Er_0 \notin \mathcal{F}_{\rm ML}$ with $|\Er_0| < \min W_0(t)$, that means $\Er_0$ cannot contain a SS, because such structures of compatible size would correspond to ML failure cases, and thus $\Er_0 \notin \mathcal{F}_{\rm QM}(0)$, i.e., the \texttt{MAXWELLPEEL} algorithm succeeds without using a single guess. 

    Next, for the inductive step, assume Cond($G-1$) for some $G>0$ and consider an erasure set $\Er_0 \notin \mathcal{F}_{\rm ML}$ with $|\Er_0| < \min W_G(t)$. The peeling stage of the \texttt{MAXWELLPEEL} algorithm reduces $\Er_0$ to $\Er$, another SS. Hence, $\mathcal{E} \in W_0(t)$ and because $|\mathcal{E}| < \min W_G(t)$, it is of size at most $\min W_{G-1}(t)$, since there is no SS of intermediary size. Then, through the application of a guess, if need be, one gets $\Er' \notin \mathcal{F}_{\rm ML}$ and $|\Er'| < |\Er| \le \min W_{G-1}(t)$. It follows from Cond($G-1$) that $\Er' \notin \mathcal{F}_{\rm QM}(G-1)$, and thus, since $\Er'$ is obtained from $\Er$ by using at most $1$ guess, $\Er \notin \mathcal{F}_{\rm QM}(G)$, establishing Cond($G$). 

    To complete the proof, notice that if $G \ge \gamma(t) = |W_0(t)|$, we have $W_G(t) = \{t+1\}$, and thus $\min W_G(t) = t+1$. Hence, for $\Er_0 \notin \mathcal{F}_{\rm ML}$, $|\Er_0| \le t < \min W_{G_{\max}}(t)$ implies through Cond($G_{\max}$) that $\Er_0 \notin \mathcal{F}_{\rm QM}(G_{\max})$. 
\end{proof}

Define
\[
\Delta_{G_{\max}}(\epsilon)\triangleq p^{\rm QM(G_{\max})}_L(\epsilon)-p^{\rm ML}_L(\epsilon),
\]
so that $\Delta_{G_{\max}}$ counts the additional failures of Maxwell on ML-correctable patterns.

\begin{corollary}[Divisibility of the gap polynomial]
\label{cor:gap-gamma}
With $\Delta_{G_{\max}}$ defined above, for $G_{\max} \ge \gamma(t)$ we have
\[
\Delta_{G_{\max}}(\epsilon)=O\!\left(\epsilon^{\,t+1}\right)\qquad (\epsilon\to 0).
\]
\end{corollary}

\begin{proof}
$\Delta_{G_{\max}}$ sums only over patterns $\Er_0$ where ML succeeds but Maxwell returns \textsf{Failure}.
By Theorem~\ref{thm:failure-large-strong}, no such patterns exist with $|\Er_0|<t+1$. Hence the lowest
power of $\epsilon$ appearing in $\Delta_{G_{\max}}$ is at least $t+1$.
\end{proof}

Applying the straightforward bounds $\gamma(d-1) \le d-s$ and $\gamma(d) \le d-s+1$ yields the following.

\begin{corollary}[Matching the ML exponent for CSS codes]
\label{cor:match-ml}
Let $d$ be the CSS code distance  and
$s=\min\{s(H_X),s(H_Z)\}$. Then $p^{\rm ML}_L(\epsilon)=\Theta(\epsilon^d)$.
 If $G_{\max}\ge d-s+1$, then $p^{\rm QM(G_{\max})}_L(\epsilon)\sim p^{\rm ML}_L(\epsilon)$ as $\epsilon\to 0$.
\end{corollary}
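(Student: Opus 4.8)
The plan is to establish the two claimed statements about ML and quantum Maxwell separately, then combine them. I first handle the ML scaling $p_L^{\mathrm{ML}}(\epsilon)=\Theta(\epsilon^d)$. Writing $p_L^{\mathrm{ML}}(\epsilon)=\sum_{w} A^{\mathrm{NTLO}}(w)\,\epsilon^w(1-\epsilon)^{n-w}$ (since an erasure pattern is ML-uncorrectable iff it contains the support of a nontrivial logical operator, the lowest-weight such patterns being exactly the supports of minimum-weight logicals), the smallest $w$ with $A^{\mathrm{NTLO}}(w)>0$ is the code distance $d$. Expanding $(1-\epsilon)^{n-w}$ as a power series, the leading term is $A^{\mathrm{NTLO}}(d)\,\epsilon^d$, and since $A^{\mathrm{NTLO}}(d)\ge 1$ this gives $p_L^{\mathrm{ML}}(\epsilon)=\Theta(\epsilon^d)$. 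One should note here that patterns of weight exactly $d$ that contain a weight-$d$ logical support but whose extra structure does not matter are already captured; I would double-check that no lower-order contributions are missed, which amounts to confirming $A^{\mathrm{NTLO}}(w)=0$ for $w<d$, true by definition of distance.

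Next I turn to the gap. The stated inequalities $\gamma(d-1)\le d-s$ and $\gamma(d)\le d-s+1$ are the key quantitative inputs, and I would verify them before invoking the corollaries: $W_0(t)$ is a subset of $\{s,s+1,\dots,t\}$ because no stopping set exists below the stopping distance $s$, so $A^{\mathrm{SS}}(w)=0$ for $w<s$ forces $W_0(t)\subseteq\{s,\dots,t\}$ and hence $\gamma(t)=|W_0(t)|\le t-s+1$; taking $t=d-1$ and $t=d$ gives the two bounds. Then, assuming $G_{\max}\ge d-s+1$, we have $G_{\max}\ge\gamma(d)$, so Corollary~\ref{cor:gap-gamma} with $t=d$ applies and yields $\Delta_{G_{\max}}(\epsilon)=O(\epsilon^{\,d+1})$.

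Finally I combine the two pieces. By definition $p_L^{\mathrm{QM}(G_{\max})}(\epsilon)=p_L^{\mathrm{ML}}(\epsilon)+\Delta_{G_{\max}}(\epsilon)$, and we have just shown the correction term is $O(\epsilon^{d+1})$ while the main term is $\Theta(\epsilon^d)$. Therefore
\[
\frac{p_L^{\mathrm{QM}(G_{\max})}(\epsilon)}{p_L^{\mathrm{ML}}(\epsilon)}
=1+\frac{\Delta_{G_{\max}}(\epsilon)}{p_L^{\mathrm{ML}}(\epsilon)}
=1+O(\epsilon)\longrightarrow 1
\]
as $\epsilon\to0$, which is exactly the asymptotic equivalence $p_L^{\mathrm{QM}(G_{\max})}(\epsilon)\sim p_L^{\mathrm{ML}}(\epsilon)$. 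The only subtlety worth flagging is that $\Delta_{G_{\max}}$ is nonnegative (Maxwell fails on a superset of ML-failure patterns, since any ML-uncorrectable pattern is also uncorrectable by the budgeted decoder), so the ratio is well-behaved and the equivalence is genuine rather than a cancellation artifact.

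The main obstacle I anticipate is not the asymptotic bookkeeping but justifying the two combinatorial inequalities $\gamma(d-1)\le d-s$ and $\gamma(d)\le d-s+1$ cleanly, since they hinge on the precise relationship between the stopping distance $s$ and which weights can host a genuine gap $A^{\mathrm{SS}}(w)>A^{\mathrm{NTLO}}(w)$; the argument above via $W_0(t)\subseteq\{s,\dots,t\}$ is the crux and everything else is a routine leading-order expansion.
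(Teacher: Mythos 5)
Your proposal is correct and follows essentially the same route as the paper, which obtains the corollary precisely by combining the bound $\gamma(d)\le d-s+1$ (your argument $W_0(t)\subseteq\{s,\dots,t\}$, via $A^{\mathrm{SS}}(w)=0$ for $w<s$) with Corollary~\ref{cor:gap-gamma} at $t=d$ and the $\Theta(\epsilon^d)$ scaling of ML failure. One cosmetic slip worth noting: $p_L^{\mathrm{ML}}(\epsilon)$ sums over all erasure patterns \emph{containing} a nontrivial logical support, not only those equal to one, so the coefficient of $\epsilon^w(1-\epsilon)^{n-w}$ is not $A^{\mathrm{NTLO}}(w)$ for $w>d$; this does not affect your conclusion, since the lower bound only needs the weight-$d$ terms (where the two counts coincide) and the upper bound only needs that every failure pattern has weight at least $d$.
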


In practice, these bounds may be very loose, meaning that each guess in the budget is usually worth more than one unit in the exponent of the $p_L$ curve. Improvements such as the guess reimbursement mechanism and sophisticated guessing strategies also help make each guess be worth even more.

\section{Numerical results}
\label{sec:numerics}

We report results for an explicit-branching implementation of the Maxwell routine, which is equivalent to running Algorithm~\ref{algo:maxwellpeel} but explores up to $2^{G_{\max}}$ assignments in parallel. Since we restrict to $G_{\max}\leq 6$ (at most 64 branches), branching is practical here. The symbolic implementation analyzed in Section~\ref{sec:theory} removes this exponential factor and is better suited for larger guessing budgets.
In addition, we adopted the score-based guessing strategy from Definition \ref{def:gs}.

We consider the erasure channel, where each qubit is erased independently with probability $\epsilon$. Decoding is considered successful if and only if the algorithm returns a unique solution, modulo the stabilizer group. 

We evaluate the decoders for two families of qLDPC codes: two BB codes from \cite{bravyi2024high} and two Quantum Tanner codes \cite{leverrier2022quantum,leverrier2025small}. 

Figure \ref{fig:QM_benchmark} compares the performance of the quantum Maxwell decoder for various guessing budgets against peeling, ML (Gaussian elimination on the erased positions) and the cluster decoder from \cite{cluster_yao_gokduman_pfister}. For each value of $\epsilon$, we sample up to $10^7$ random erasure patterns (or $10^6$ for the cluster decoder), and stop after 100 failures.

\begin{figure*}[htb]
    \centering
    \includegraphics[width=0.9\linewidth]{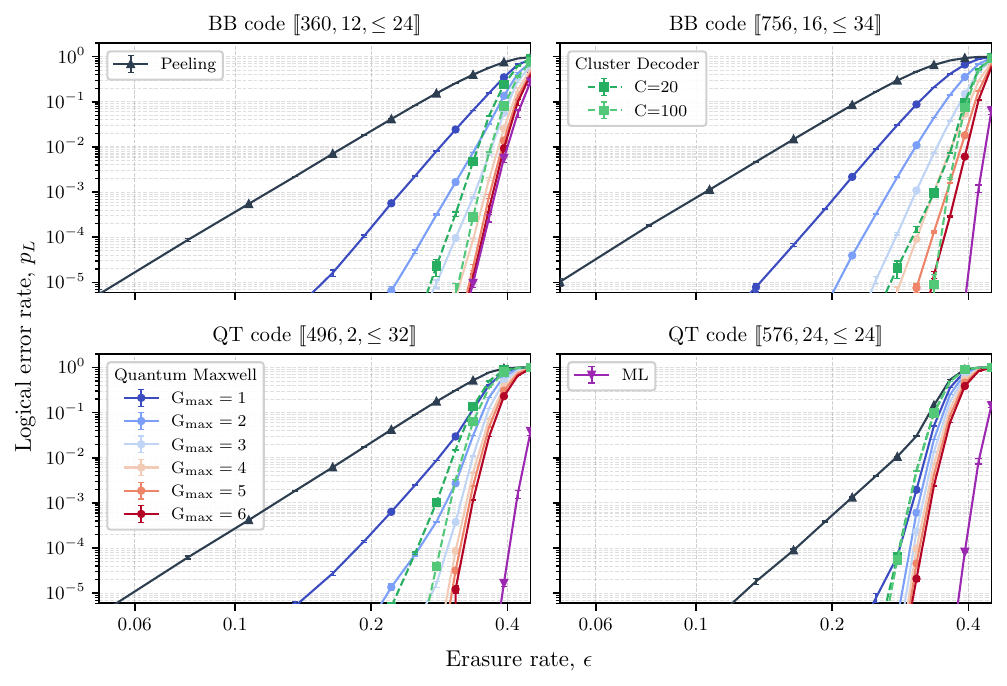}
    \caption{
    Logical error rate $p_L$ versus erasure rate $\epsilon$ for two BB codes (top) and two quantum Tanner codes (bottom). We compare peeling (black), ML decoding (purple), the quantum Maxwell decoder with $1 \leq G_{\max} \leq 6$ (with score-based pivot selection) and the cluster decoder with cluster cutoff $C \in \{20, 100\}$~\cite{cluster_yao_gokduman_pfister}.}
    \label{fig:QM_benchmark}
\end{figure*}

We observe that the quantum Maxwell (QM) decoder interpolates the performances of peeling and ML as we vary the guessing budget $G_{\max}$. For the $\llbracket 360,12, \leq 24\rrbracket$ BB code, QM seemingly matches ML performance with up to $6$ guesses. On the remaining codes, a visible performance gap remains in the waterfall region. Due to the high parameters of the codes, probing the error floor region is difficult (in terms of sampling complexity) as we increase the guessing budget. 

The QM and cluster decoders exhibit different tradeoffs. The logical error rate drops faster for the QM decoder in the waterfall region, but the cluster decoder catches up more quickly as the erasure rate decreases. While both decoders trade complexity for performance, it is difficult to establish a direct equivalence between the tunable parameters (maximum cluster size $C$, for cluster decoder, and guessing budget $G_{\max}$, for quantum Maxwell) of the two algorithms.

\section{Conclusion}
We have presented a quantum variant of the Maxwell decoder for CSS quantum LDPC codes on the erasure channel. The decoder provides an explicit performance-complexity tradeoff through a single parameter $G_{\max}$ that bounds the number of guesses,  allowing it to either reproduce ML performance, or approximate it with linear complexity.

We proved that the symbolic Maxwell routine runs in $O(|\Er_0|\,d_v\,d_c\,G^2_{\max})$ bit operations. In particular, for bounded degrees and constant $G_{\max}$, the runtime is linear in the number of erasures, making the approach practically viable for large codes. In the asymptotic regime $\epsilon \to 0$, we showed that $G_{\max} \ge d - s + 1$ guesses are enough to match ML performance at the leading order (same failure exponent), where $d$ and $s$ denote the minimum and stopping distances respectively. The bound is not always tight, and multiple improvements on the algorithm may make the guessing budget worth more. For instance, integration with preprocessing methods such as pruning may yield additional gains for some code families. 

Numerical simulations on bivariate bicycle and quantum Tanner codes demonstrate that the quantum Maxwell decoder interpolates smoothly between peeling and ML, achieving near-optimal logical error rates with modest guessing budgets, while comparisons with the cluster decoder show competitive performance. 

In future work, the decoder can be extended to general stabilizer codes. Finally, in light of recent developments such as \cite{ye2025beamsearchdecoderquantum}, the adaptation of the symbolic quantum Maxwell to Pauli noise could be investigated, and further evaluated under circuit-level noise models.

\bibliographystyle{IEEEtran}
\bibliography{biblio}

\appendices

In this appendix, we discuss possible optimizations of the quantum Maxwell decoder. In Section~\ref{app:pruning}, we first study the impact of an additional pruning step and that of enlarging the set of low-weight generators to help peeling. In Section~\ref{app:strategies}, we consider the impact of optimizing the choice of the bit that are guessed.

\section{Pruning and redundant checks}
\label{app:pruning}

A simple improvement to peeling-based erasure decoders for qLDPC codes is to add a \emph{pruning} procedure~\cite{Connolly_2024}. The idea is to note that the most frequent stopping sets that the decoder will encounter are generators of the qLDPC code, and that it is simple to address them. Since the goal is to recover the true error up to a stabilizer element, one can in fact choose to add a generator to the error. This means that if a full generator is erased, then one can pick an element of its support and fix the corresponding value of the error for that element. It is possible to apply this procedure for all generators, but also combinations of several generators, as was investigated in \cite{Connolly_2024}. We define depth-$D$ pruning to correspond to pruning stabilizers consisting of combinations of up to $D$ generators. In particular, $D=0$ means no pruning and $D=1$ means we prune erased generators whose entire support is erased. In practice, $D>1$ does not improve the performance significantly and we don't consider it here. 
Formally, depth-1 pruning is as follows:
If a generator $g$ has $\mathrm{supp} \,  g \subseteq \mathcal{E}$, choose a variable $j \in \mathrm{supp} \, g$, set $w_j \leftarrow 0$ (gauge fix), and remove $j$ from $\mathcal{E}$. This breaks the stopping set induced by the generator without consuming a Maxwell guess. We iterate this until no-fully erased generator remains.
Figure \ref{fig:pruning} compares the Maxwell decoder with and without this extra pruning procedure (bottom vs top) and shows a modest improvement of performance when combining guessing with pruning. 

Next, we exploited a feature of certain quantum Tanner codes that admit many redundant generators of lowest weight. In particular, the quantum Tanner code $\llbracket 576, 24, \leq 24\rrbracket$ is naturally defined with generators of weight 9 that correspond to support of codewords of two classical $[6,3,3]$ codes~\cite{leverrier2025small}. Since these local codes admit 4 codewords of weight 3, the quantum Tanner code naturally admits many redundant generators of weight 9. 
Each of the two $[6,3,3]$ component codes has 4 weight-3 codewords, giving $4\times 4=16$ parity-checks of weight 9. Dividing by 9 to correct for overcounting (since the 16 redundant stabilizer elements replace the 9 operators from the original construction), one arrives at 1024 distinct stabilizers of weight 9. 
We form two regularized parity-check matrices that contain this full set of 1024 generators, yielding row weight 9 and column weight 8 for both $H_X$ and $H_Z$.
Running the Maxwell decoder on these regularized parity-check matrices turns out to significantly improve the performance, especially when combined with pruning, as can be observed on the right column of Figure \ref{fig:pruning}. That adding redundant parity-checks helps peeling-based decoders should not be surprising because additional parity-checks give more opportunities to obtain dangling checks.

\begin{figure*}[htb]
    \centering
    \includegraphics[width=.9\linewidth]{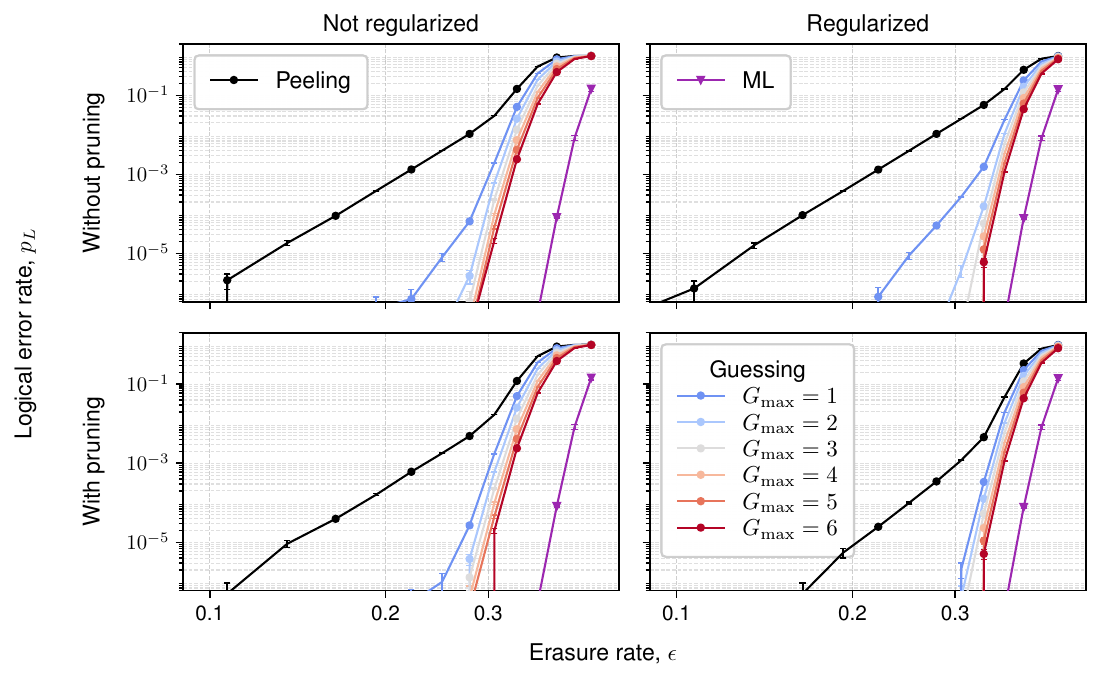}
    \caption{Impact of pruning and parity-check regularization on the $[[576,24,\leq 24]]$ quantum Tanner code. Rows: pruning off/on. Columns: original vs regularized parity-check matrices. Curves show peeling, ML, and quantum Maxwell for $1\leq G_{\max} \leq 6$.}
    \label{fig:pruning}
\end{figure*}

\section{Comparing guessing strategies}
\label{app:strategies}

The quantum Maxwell decoder presented in Algorithm \ref{algo:maxwellpeel} does not necessarily specify how to choose the pivot variable node. A possibility is to pick it at random among the erased set, but better strategies may exist. 
For instance, one can choose the guess score of Definition \ref{def:gs}, i.e., the number of neighboring checks of residual degree 2.
We compare these two strategies in Figure \ref{fig:guess_strat_plot}. It reports both $D=0$ (no pruning) and $D=1$ (depth-1 pruning), showing that score-based guessing reduces the low-$\epsilon$ error floor in both cases.

\begin{figure*}
    \centering
    \includegraphics[width=.9\linewidth]{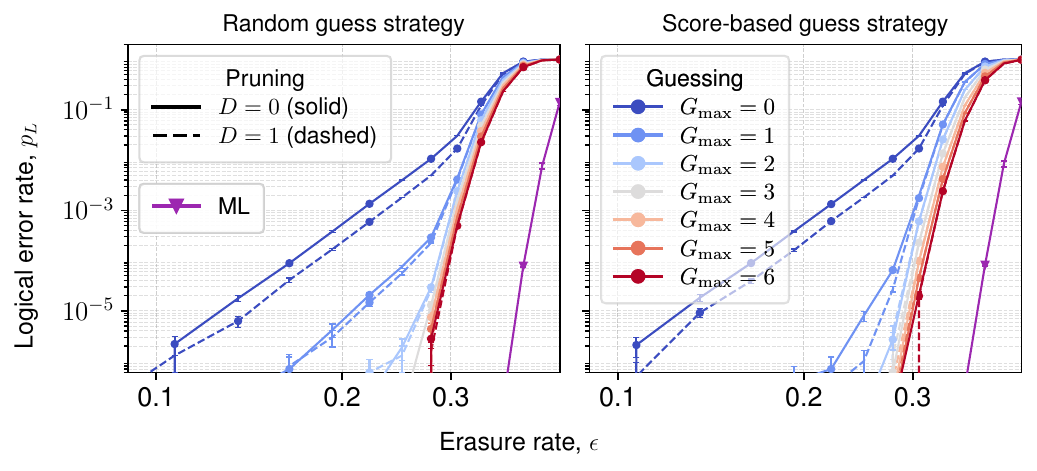}
    \caption{Random vs score-based pivot selection on the $[[576,24,\leq 24]]$ quantum Tanner code. Solid/dashed curves indicate pruning depth $D \in \{0,1\}$. Score-based selection significantly lowers the error floor at small $\epsilon$ for modest $G_{\max}$.
    }
    \label{fig:guess_strat_plot}
\end{figure*}

Moreover, we demonstrate that implementing the aforementioned guessing strategy does not incur in an increased complexity to the \texttt{MAXWELLPEEL} algorithm. 

\begin{proposition}
    Implementing the guessing strategy from Definition \ref{def:gs} does not increase the asymptotic time complexity of algorithm \texttt{MAXWELLPEEL}. 
\end{proposition}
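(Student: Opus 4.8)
The plan is to show that the score-based pivot selection can be bolted onto \texttt{MAXWELLPEEL} so that the extra bookkeeping stays within the $O(e\,d_v\,d_c\,G_{\max}^2)$ budget of Theorem~\ref{thm:runtime}. Two things must be accounted for: (a) maintaining the guess score of every erased variable as the residual graph shrinks, and (b) answering the query ``return an erased variable of maximum score'' each time a guess is performed. The central observation is that the guess score of a variable is an integer in $\{0,1,\dots,d_v\}$, since it counts a subset of its at most $d_v$ incident checks; this constant range is what makes both tasks cheap.

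First I would argue that scores are maintained incrementally at no asymptotic cost. A variable's score changes only when one of its incident checks crosses residual degree~$2$. Such transitions are driven exclusively by variable removals (peeling or guessing); pivot demotions act on the symbolic affine forms $w_j(x),s_i(x)$ and never alter residual degrees, so they leave all scores untouched. Each removal of a variable $v_j$ decrements the residual degree of its $\le d_v$ incident checks, and each decrement of a check $c$ affects only the $\le d_c$ erased neighbors of $c$: if $\deg_{\Er}(c)$ drops from $3$ to $2$ its two surviving erased neighbors gain one point, and if it drops from $2$ to $1$ its single surviving neighbor loses one point. Summing over all $e$ removals gives at most $O(e\,d_v\,d_c)$ score-update events.

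Next I would fix a data structure realizing both the updates and the max-query. Because scores lie in the constant range $\{0,\dots,d_v\}$, I keep $d_v+1$ buckets, each a doubly linked list of the erased variables with that score, together with a pointer from each variable to its position. A single score update is then a constant-time move between adjacent buckets, and removing a peeled or guessed variable is a constant-time deletion; over the whole run these cost $O(e\,d_v\,d_c)$ and $O(e)$ respectively. Answering a max-query amounts to scanning the buckets from index $d_v$ downward to the first nonempty one, at cost $O(d_v)$ per guess. Since each guess permanently removes a distinct variable from $\Er$ (a demoted pivot re-expresses its affine form but does not reinsert the variable), the total number of guesses is at most $e$, so all max-queries together cost $O(e\,d_v)$.

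Combining the pieces, the score-based strategy adds $O(e\,d_v\,d_c)$ work on top of the baseline routine, which is dominated by the $O(e\,d_v\,d_c\,G_{\max}^2)$ bound of Theorem~\ref{thm:runtime}; hence the asymptotic complexity is unchanged. The only delicate step is the incremental-maintenance accounting of the second paragraph: one must verify that the \emph{only} events changing a score are degree-$2$ crossings, and charge each such event to a check-degree decrement, so that the $O(e\,d_v\,d_c)$ bound is not inadvertently blown up by, e.g., rescanning all erased variables at each guess. The bounded score range is what rescues the max-query from the naive $O(e)$-per-guess cost that would otherwise produce a quadratic $O(e^2)$ term.
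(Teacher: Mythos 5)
Your proposal is correct and follows essentially the same route as the paper's proof: exploit the constant score range $\{0,\dots,d_v\}$ to maintain a bucket structure updated incrementally during peeling at total cost $O(e\,d_v\,d_c)$, with each max-query answered in $O(d_v)$ by scanning buckets. Your additional observations---that pivot demotions never alter residual degrees and hence leave scores untouched, and the explicit charging of score changes to degree-$2$ crossings---make the accounting more rigorous than the paper's, but do not change the argument.
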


\begin{proof}
    The key observation to establish this result is the fact that, because we assume $(d_v, d_c)$-LDPC codes, the guessing score is by definition a bounded quantity, $\mathrm{GS}(v_i) \in \{0, \dots, d_v\}$ for any variable node $v_i \in \Er$. With this in mind, one can use buckets to store the variables attaining each one of the possible scores. 
    
    First, one can initialize the guessing score for all erased variables in the beginning of the algorithm with cost $O(|\Er|\, d_v\, d_c)$, and initialize the buckets structure. Next, the scores have to be updated at each step of peeling. This can be done with a cost of at most $d_c$ degree updates for each variable removed from the erasure, leading to a cost $O(|\Er|\, d_v\, d_c)$ in the peeling phase to keep the bucket structure consistent. 

    The final step to verify is the cost of electing the new pivot, which requires finding a variable with maximum score. Because the buckets are kept up to date during the peeling phase, and there is only a constant amount of them, finding the largest score with a nonempty bucket has a cost $O(d_v)$, and thus the pivot selection also does not incur in an augmented complexity. 
\end{proof}

\end{document}